\renewcommand\title[1]{\gdef\@title{\reset@font\Large\bfseries #1}}
\renewcommand\section{\@startsection {section}{1}{\z@}%
                                   {-3.5ex \@plus -1ex \@minus -.2ex}%
                                   {2.3ex \@plus.2ex}%
                                   {\normalfont\large\bfseries}}
\renewcommand\subsection{\@startsection{subsection}{2}{\z@}%
                                     {-3ex\@plus -1ex \@minus -.2ex}%
                                     {1.5ex \@plus .2ex}%
                                     {\normalfont\normalsize\bfseries}}
\renewcommand\subsubsection{\@startsection{subsubsection}{3}{\z@}%
                                     {-2.5ex\@plus -1ex \@minus -.2ex}%
                                     {1.5ex \@plus .2ex}%
                                     {\normalfont\normalsize\bfseries}}
\def\@runningauthor{}\newcommand{\runningauthor}[1]{\def\runningauthor{#1}}
\def\@runningtitle{}\newcommand{\runningtitle}[1]{\def\runningtitle{#1}}
\renewcommand{\ps@plain}{%
\renewcommand{\@evenfoot}{\footnotesize Sequences and Their Applications (SETA) 2018\hfill\thepage}
\renewcommand{\@oddfoot}{\footnotesize Sequences and Their Applications (SETA) 2018 \hfill\thepage}
\renewcommand{\@evenhead}{\footnotesize\scshape \hfill\runningauthor\hfill}
\renewcommand{\@oddhead}{\footnotesize\scshape \hfill\runningtitle\hfill}}
\g@addto@macro\bfseries{\boldmath}
\theoremstyle{plain}
\newtheorem{theorem}{Theorem}
\newtheorem{lemma}[theorem]{Lemma}
\theoremstyle{definition}
\newtheorem{example}[theorem]{Example}
\theoremstyle{remark}
\title{The 2-adic complexity of a class of binary sequences with optimal autocorrelation magnitude}
\runningtitle{The 2-adic complexity of a class of binary sequences with optimal autocorrelation magnitude}
\author{Yuhua Sun$^{a,c}$\thanks{Yuhua Sun is financially supported by Shandong Provincial Natural Science Foundation of China (No. ZR2017MA001, No. ZR2016FL01), the Open Research Fund from Shandong provincial Key Laboratory of Computer Networks, Grant No. SDKLCN-2017-03.},
\quad
Tongjiang Yan$^{a,b}$\thanks{Tongjiang Yan is the corresponding author and is financially supported by Qingdao application research on special independent innovation plan project (No.16-5-1-5-jch), the Open Research Fund from Key Laboratory of Applied Mathematics of Fujian Province University (Putian University) (No.SX201702), and the Fundamental Research Funds for the Central Universities (No.17CX02030A).
}, \quad Zhixiong Chen$^{b}$\\
\small a. College of Sciences, China University of Petroleum,\\
\small Qingdao 266555,
Shandong, China\\
\small b. Provincial Key Laboratory of Applied Mathematics, Putian University, \\
\small Putian, Fujian 351100, P.R. China\\
\small c. Qilu University of Technology (Shandong Academy of Sciences), Shandong Computer \\
\small Science Center (National Supercomputer Center in Jinan), \\
\small Shandong Provincial Key Laboratory of Computer Networks.\\
\tt $\{$sunyuhua\_1,yantoji$\}$@163.com; ptczx@126.com
}
\runningauthor{Y.\ Sun, T.\ Yan, Z.\ Chen}
\date{}
\begin{document}

\maketitle

\thispagestyle{empty}

\begin{abstract}
Recently, a class of binary sequences with optimal autocorrelation magnitude has been presented by Su et al. based on interleaving technique and Ding-Helleseth-Lam sequences (Des. Codes Cryptogr., https://doi.org/10.1007/s10623-017-0398-5). And its linear complexity has been proved to be large enough to resist the B-M Algorighm (BMA) by Fan (Des. Codes Cryptogr., https://doi.org/10.1007/s10623-018-0456-7). In this paper, we study the 2-adic complexity of this class of binary sequences.
Our result shows that the 2-adic complexity of this class of sequence is no less than one half of its period, i.e., its 2-adic complexity is large enough to resist the Rational Aproximation Algorithm (RAA).\\
Key words: cyclotomic sequence, interleaved sequence, optimal autocorrelation, 2-adic complexity.
\end{abstract}


\section{Introduction}

Let $v$ be a positive integer and let $s_{i}=(s_{i}(0),s_{i}(1),\cdots,s_{i}(v-1))$, $0\leq i\leq u-1$, be $u$ binary sequences of period $v$. Based on these $u$ binary sequences, a $v\times u$ matrix $I=(I_{i,j})$ is given by
\begin{equation}
I=\left(
\begin{array}{cccc}
s_{0}(0) & s_{1}(0) &\cdots &s_{u-1}(0)\\
s_{0}(1) & s_{1}(1) &\cdots &s_{u-1}(1)\\
\vdots & \vdots &\ddots &\vdots\\
s_{0}(v-1) & s_{1}(v-1) &\cdots &s_{u-1}(v-1)
\end{array}
\right).
\end{equation}
Then an interleaved sequence $s$ can be obtained by concatenating the successive rows of the above matrix $I$. For convenience, the sequence $s$ is written  as
$$
s=I(s_{0},s_{1},\cdots,s_{u-1}),
$$
where $I$ represents the interleaved operator. This construction method of interleaved sequences, called interleaving technique, was originally presented by Gong \cite{Gong}. In order to give consideration to both long period and low autocorrelation, most related literatures focus on the case of $u=4$. Note that the period $N$ of an interleaved sequence satisfies $N=4v\equiv0\ \mathrm{mod}\ 4$ for $u=4$. For any binary sequence $s$ with period $N\equiv0\ \mathrm{mod}\ 4$, it is called optimal in the sense of autocorrelation value if its autocorrelation value $AC_s(\tau)\in\{N,0,4\}$ or $\{N,0,-4\}$, and called optimal in the sense of autocorrelation magnitude if $AC_s(\tau)\in\{N,0,4,-4\}$.

Binary sequences with optimal autocorrelation value/magnitude, large linear complexity and high 2-adic complexity have important applications in stream cipher. Therefore, constructions and analyses of sequences meeting all the three properties have become research hotspots for pseudo-random sequence investigators.

At present, most of the known interleaved sequences of $v\times4$ are with the form $s=I(s_{0},s_{1},s_{0}^{\prime},s_{1}^{\prime})$, $s=I(s_{0},s_{0}^{\prime},s_{1},s_{1}^{\prime})$ or $s=I(s_{0},s_{1},s_{2},s_{2}^{\prime})$, where $s_{0}^{\prime}$, $s_{1}^{\prime}$ and $s_{2}^{\prime}$ are shift sequences, or complementary sequences or complementary sequences of shift sequences of sequences $s_{0}$, $s_{1}$, $s_{2}$ respectively. Sequences with the previous two forms are essentially constructed by two different sequences and sequences with the third form are constructed by three diferent sequences.
For example, using interleaving technique, Arasu, Ding and Helleseth et al. \cite{Arasu} constructed a class of sequences with optimal autocorrelation value, which later were proved to be with large linear complexity by Wang and Du \cite{Wang-Du}. Then, Tang and Gong \cite{Tang-Gong} constructed three classes of sequences with optimal autocorrelation value/magnitude and these sequences have also been verified to be with large linear complexity by Li and Tang \cite{Li Nian}. Next, Tang and Ding \cite{Tang-Ding} gave a more general construction which consists of the constructions of both \cite{Arasu} and \cite{Tang-Gong}. Moreover, Yan et al. \cite{Yan-Information S} also generalized the construction of \cite{Tang-Gong} and they also presented a sufficient and necessary condition for an interleaved sequence to be with optimal autocorrelation. It should be pointed out that all the interleaved sequences mentioned above were essentially constructed by two different sequences, i.e., they have the form $s=I(s_{0},s_{1},s_{0}^{\prime},s_{1}^{\prime})$ or $s=I(s_{0},s_{0}^{\prime},s_{1},s_{1}^{\prime})$. But there were no results about the 2-adic complexities of these interleaved binary sequences until Xiong et al. \cite{Xiong Hai-1} presented a method of circulant matrix and Hu \cite{Hu Honggang} gave a method of using exact autocorrelation value distributions to determine the 2-adic complexities of binary sequences. Using circulant matrix, Xiong et al. proved that all sequences with ideal autocorrelation and two classes of sequences with optimal autocorrelation, i.e., Legendre sequences and Ding-Helleseth-Lam sequences \cite{Ding-H-L}, have maximal 2-adic complexities \cite{Xiong Hai-1}. They also show that all the interleaved sequences with optimal autocorrelation value/magnitude mentioned above have also maximal 2-adic complexities. Interesting enough, most of the results in \cite{Xiong Hai-1,Xiong Hai-2} can also be simply proved by Hu's method given in \cite{Hu Honggang}.

Very recently, using three different sequences and the above third interleaved structure $s=I(s_{0},s_{1},s_{2},s_{2}^{\prime})$, Su et al. \cite{Su wei} constructed a class of binary sequences with optimal autocorrelation using interleaving technique and Ding-Helleseth-Lam sequences. In quick succession, Fan \cite{Fan Cuiling} proved that this class of sequences has large linear complexity.

In this paper, combining the methods of Hu \cite{Hu Honggang} and Xiong et al. \cite{Xiong Hai-2}, we will study the 2-adic complexity of this new class of sequence. A lower bound on the 2-adic complexity will be given in Section 2, and we will give a summary in Section 3. Our result shows that the 2-adic complexity of this new class of  sequences is at least one half of its period, which is large enough to resist against the Rational Algorithm \cite{Andrew Klapper}.


\section{Main result}\label{section 2}

In this section, we will study the 2-adic complexities of the sequences constructed by Su et al. \cite{Su wei} and give lower bounds on the 2-adic complexities. Before presenting our main result, we first give the definition of 2-adic complexity of a binary sequence and a simple description of the construction by Su et al..

Let $N$ be a positive integer and $\mathbf{Z}_{N}$ the ring of integers modulo $N$. And let $s=\Big(s(0),s(1),\cdots,s(N-1)\Big)$ be a binary sequence of period $N$.
Denote $S(x)=\sum\limits_{i=0}^{N-1}s(i)x^i\in \mathbb{Z} [x]$ and suppose
\begin{equation}
\frac{S(2)}{2^N-1}=\frac{\sum\limits_{i=0}^{N-1}s(i)2^i}{2^N-1}=\frac{e}{f},\ 0\leq e\leq f,\ \mathrm{gcd}(e,f)=1.\nonumber
\end{equation}
Then the integer $\lfloor\mathrm{log}_2(f+1)\rfloor$ is called the 2-adic complexity of the sequence $s$ and is denoted as $\Phi_{2}(s)$, i.e.,
\begin{equation}
\Phi_{2}(s)=\left\lfloor\mathrm{log}_2\left(\frac{2^N-1}{\mathrm{gcd}\left(2^N-1,S(2)\right)}+1\right)\right\rfloor,\label{2-adic calculation}
\end{equation}
where $\lfloor z\rfloor$ is the smallest integer that is greater than or equal to $z$.
For a binary periodic sequence used as the key stream generator in a stream cipher, its 2-adic complexity should be no less than one half of its period to resist the RAA (Rational Approximation Algorithm) \cite{Andrew Klapper}.

Denote
$C_{s}=\{i\in \mathbf{Z}_N|s(i)=1\}$. Then we call $C_{s}$ the support of the sequence $s$.
Let $p$ be an odd prime satisfying $p=4k+1=a^{2}+4b^{2}$, where $k$ is odd integer and $b=\pm1$. Suppose $g$ is a primitive root of $p$.
Define
\begin{eqnarray}
D_j&=&\{g^{4i+j}\ \mathrm{mod}\ p\mid i=0, 1, \cdots, k-1\},\ j=0, 1, 2, 3.\nonumber
\end{eqnarray}
Then each $D_j$ is called the cyclotomic class of order $4$ with respect to $\mathbf{Z}_{p}$, where $0\leq j\leq3$.
Let $s^{(1)},\ s^{(2)},\ s^{(3)},\ s^{(4)}$ be the binary sequences of period $p$ with supports $D_{0}\cup D_{1},\ D_{0}\cup D_{3},\ D_{1}\cup D_{2},\ D_{2}\cup D_{3}$, respectively. Then each $s^{i}$ is called Ding-Helleseth-Lam sequence, which has been proved to be optimal sequence with autocorrelation values $1$ and $-3$ by Ding et al. \cite{Ding-H-L}.
Suppose that binary sequence $w=(w(0),w(1),w(2),w(3))$ satisfy $w(0)=w(2)$, $w(1)=w(3)$, and $d$ satisfies $4d\equiv1\ \mathrm{mod}\ p$. Then one of the interleaved sequences by Su et al. is given by
\begin{eqnarray}
s=I\Big(s^{(3)}+w(0), \mathbf{L}^{d}(s^{(2)})+w(1),\mathbf{L}^{2d}(s^{(1)})+w(2),\mathbf{L}^{3d}(s^{(1)})+w(3)\Big),\label{sequence-de}
\end{eqnarray}
where $s^{\prime}+1=\Big(s^{\prime}(0)+1,s^{\prime}(1)+1,\cdots,s^{\prime}(p-1)+1\Big)$ and  $\mathbf{L}^{d}(s^{\prime})=\Big(s^{\prime}(d),s^{\prime}(d+1),\cdots,s^{\prime}(p-1),s^{\prime}(0),s^{\prime}(1),s^{\prime}(d-1)\Big)$ is the cyclic left shift operator of length $d$ for a binary sequence $s^{\prime}=\Big(s^{\prime}(0),s^{\prime}(1),\cdots,s^{\prime}(p-1)\Big)$ of period $p$.

Now, we present our main result.
\begin{theorem}\label{main result}
Let $s=I\Big(s^{(3)}+w(0), \mathbf{L}^{d}(s^{(2)})+w(1),\mathbf{L}^{2d}(s^{(1)})+w(2),\mathbf{L}^{3d}(s^{(1)})+w(3)\Big)=\Big(s(0),s(1),\cdots,s(4p-1)\Big)$ be the sequence of period $N=4p$ defined in Eq. (\ref{sequence-de}) by Su et al.. Then the 2-adic complexity $\Phi(s)$ of the sequence $s$ satisfy the following lower bound
\begin{eqnarray}
2p\leq\Phi(s)\leq 4p-2,
\end{eqnarray}
i.e., the 2-adic complexity of $s$ is no less than one half of the period.
\end{theorem}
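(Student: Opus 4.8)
The plan is to control the single quantity $\gcd(2^{N}-1,S(2))$ with $N=4p$, since by \eqref{2-adic calculation} the asserted lower bound $\Phi(s)\ge 2p$ is equivalent to
\[
\frac{2^{4p}-1}{\gcd(2^{4p}-1,S(2))}+1\ge 2^{2p},
\]
and after using $2^{4p}-1=(2^{2p}-1)(2^{2p}+1)$ this reduces to showing
\[
\gcd(2^{4p}-1,S(2))\le 2^{2p}+1 .
\]
The upper bound is the easy half. Writing $s(4t+j)=s_{j}(t)$ for the four column sequences one has $S(2)=\sum_{j=0}^{3}2^{j}S_{j}(16)$ with $S_{j}(x)=\sum_{t=0}^{p-1}s_{j}(t)x^{t}$, and since $4\mid N$ we have $5\mid 2^{4}-1\mid 2^{N}-1$. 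Reducing modulo $5$ (using $16\equiv 1$, the common weight $2k$ of the Ding--Helleseth--Lam sequences, and $w(0)=w(2)$, $w(1)=w(3)$) gives $S(2)\equiv 5\bigl(w(0)+w(1)\bigr)\equiv 0\pmod 5$, so $5\mid\gcd(2^{N}-1,S(2))$ and hence $\Phi(s)\le 4p-2$.

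For the lower bound I would first pass to the $\pm1$ form of the sequence. Put $c(i)=1-2s(i)$ and $C(x)=\sum_{i=0}^{N-1}c(i)x^{i}$, so that $C(2)=(2^{N}-1)-2S(2)\equiv -2S(2)\pmod{2^{N}-1}$; as $2$ is a unit modulo the odd number $2^{N}-1$, this yields $\gcd(2^{N}-1,S(2))=\gcd(2^{N}-1,C(2))$. Next I would use the autocorrelation identity
\[
C(x)\,C(x^{-1})\equiv\sum_{\tau=0}^{N-1}AC_{s}(\tau)\,x^{\tau}\pmod{x^{N}-1},
\]
evaluated at $x=2$ with $2^{-1}$ read as $2^{N-1}$ modulo $2^{N}-1$. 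Writing $T(2)=\sum_{\tau=0}^{N-1}AC_{s}(\tau)2^{\tau}$ this gives $C(2)C(2^{-1})\equiv T(2)\pmod{2^{N}-1}$, and a prime-power-by-prime-power comparison then shows
\[
\gcd(2^{N}-1,S(2))\ \big|\ \gcd(2^{N}-1,T(2)).
\]
Thus it suffices to bound $\gcd(2^{N}-1,T(2))$ by $2^{2p}+1$.

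Here I would insert the exact autocorrelation distribution of the Su et al.\ sequence. Since $s$ has optimal autocorrelation magnitude, $AC_{s}(0)=4p$ and $AC_{s}(\tau)\in\{0,4,-4\}$ for $\tau\neq 0$, whence
\[
T(2)\equiv 4p+4\,U\pmod{2^{N}-1},\qquad U=\sum_{\tau\in P}2^{\tau}-\sum_{\tau\in M}2^{\tau},
\]
with $P=\{\tau:AC_{s}(\tau)=4\}$ and $M=\{\tau:AC_{s}(\tau)=-4\}$. Because $2^{N}-1$ is odd the factor $4$ is a unit, so $\gcd(2^{N}-1,T(2))=\gcd(2^{N}-1,\,p+U)$. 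To make $P$ and $M$ explicit I would decompose each shift as $\tau=4a+b$ and evaluate $AC_{s}(\tau)$ through the auto- and cross-correlations of $s^{(1)},s^{(2)},s^{(3)}$ dictated by the interleaved form \eqref{sequence-de} and the exponent $d$ with $4d\equiv 1\pmod p$; alternatively one may take the distribution already computed by Su et al.\ as input. Substituting it into $U$ and collapsing the resulting geometric-type sums over $\tau$ produces a closed form for $p+U$ modulo $2^{4p}-1$.

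The main obstacle is the concluding number-theoretic estimate: proving that this closed form satisfies $\gcd(2^{4p}-1,\,p+U)\le 2^{2p}+1$. The structural feature that should drive it is the half-period behaviour of the interleaved autocorrelation, namely the way the contributions at $\tau$ and $\tau+2p$ pair up; this is what prevents the large factor $2^{2p}-1$ of $2^{4p}-1=(2^{2p}-1)(2^{2p}+1)$ from entering the common divisor and confines $\gcd(2^{4p}-1,p+U)$ to a divisor of $2^{2p}+1$. Verifying this will require a careful analysis of $p+U$ modulo the two factors $2^{2p}\mp 1$ separately, together with the elementary fact that $p$ is coprime to $2^{4p}-1$. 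Once it is in hand, the chain $\gcd(2^{N}-1,S(2))\mid\gcd(2^{N}-1,T(2))=\gcd(2^{N}-1,p+U)\le 2^{2p}+1$ delivers the required inequality, and therefore $\Phi(s)\ge 2p$, completing the proof.
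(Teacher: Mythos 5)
Your setup is sound and, up to the point where it stops, it follows essentially the same route as the paper: the upper bound via $5\mid S(2)$ (your weight computation modulo $5$ is a clean version of the paper's Lemma~\ref{main-lemma-3}), and the reduction of the lower bound, via Hu's identity $T(x)T(x^{-1})\equiv\sum_\tau AC_s(\tau)x^\tau \pmod{x^N-1}$, to bounding $\gcd\bigl(2^{4p}-1,\sum_\tau AC_s(\tau)2^\tau\bigr)$ by $2^{2p}+1$. But at exactly that point you declare ``the main obstacle is the concluding number-theoretic estimate'' and never carry it out; that estimate \emph{is} the theorem, and everything before it is routine. What is missing is, first, the actual closed form (the paper's Lemma~\ref{main-lemma}): substituting Su et al.'s distribution and the shifted cyclotomic classes $D_j+\frac{p-1}{4}$, $D_j+\frac{p-1}{2}$, $D_j+\frac{3(p-1)}{4}$ one gets, up to a unit,
\begin{equation*}
\frac{2^{4p}-1}{15}+(2^{2p}+1)(2^{p}-1)-2^{p}(2^{2p}-1)\,b\sum_{i\in \mathbf{Z}_{p}^{\ast}}\Bigl(\tfrac{i}{p}\Bigr)2^{4i}-p ,
\end{equation*}
and the decisive structural feature is that the one ``large, uncontrollable'' term --- the Legendre-symbol sum --- carries an explicit factor $2^{2p}-1$ and therefore dies modulo $2^{2p}-1$. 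This, rather than your conjectured ``pairing of contributions at $\tau$ and $\tau+2p$,'' is the mechanism that keeps the factor $2^{2p}-1$ out of the gcd; as stated, your heuristic is an unsubstantiated guess, and nothing in the proposal would let a reader verify it.

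Second, even granted the closed form, the gcd bound is not an ``elementary fact about $p$ being coprime to $2^{4p}-1$'': the residues of $S(2)T(2^{-1})$ modulo the coprime pieces of $2^{2p}-1$ are $-p \pmod{2^p-1}$ and $-(p+4)\pmod{\frac{2^p+1}{3}}$, so one must prove $\gcd(p,2^p-1)=1$ \emph{and} $\gcd\bigl(p+4,\frac{2^p+1}{3}\bigr)=1$ (the paper's Lemma~\ref{main-lemma-2}); the second requires a genuine argument (a case analysis on $\mathrm{ord}_r(2)\in\{2,p,2p\}$ for a putative common prime factor $r$, including ruling out $r=3$ via $9\nmid 2^p+1$), and one must separately dispose of the stray factor $3$ of $2^{2p}-1$, which the paper does by directly computing $S(2)\bmod 3$ (Lemma~\ref{main-lemma-3}(i)), since the congruence modulo $\frac{2^p+1}{3}$ says nothing about it. None of these steps appears in your proposal, so the lower bound $\Phi(s)\ge 2p$ remains unproved.
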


In order to prove Theorem \ref{main result}, we need a series of lemmas. Firstly, we list the autocorrelation distribution of the sequence $s$ in Eq. (\ref{sequence-de}) which has been showed by Su et al. \cite{Su wei}.
\begin{lemma} \label{O-A-M}(\cite{Su wei})
Let $s=I\Big(s^{(3)}+w(0), \mathbf{L}^{d}(s^{(2)})+w(1),\mathbf{L}^{2d}(s^{(1)})+w(2),\mathbf{L}^{3d}(s^{(1)})+w(3)\Big)$ be the same as that in Eq. (\ref{sequence-de}). Then $s$ is optimal with respect to the autocorrelation magnitude. Specifically, for any $\tau$, $1\leq\tau\leq 4p-1$, if we write $\tau=\tau_1+4\tau_2$, where ($\tau_1=0$ and $1\leq\tau_2\leq p-1$) or ($1\leq\tau_1\leq3$ and $0\leq\tau_2\leq p-1$), the distribution of the autocorrelation is given by
$$
AC_s(\tau)=\left\{
\begin{array}{lllllllllll}
-4,\ \ \ \ \ \tau_1=0,\ \tau_2\neq0,\\
-4,\ \ \ \ \ \tau_1=1,\ \tau_2+d=0\ (\mathrm{mod}\ p),\\
-4b,\ \ \ \ \tau_1=1,\ (\tau_2+d)\ (\mathrm{mod}\ p)\in D_{0}\cup D_{2},\\
4b,\ \ \ \ \ \ \tau_1=1,\ (\tau_2+d)\ (\mathrm{mod}\ p)\in D_{1}\cup D_{3},\\
4,\ \ \ \ \ \ \ \tau_1=2,\ \tau_2+2d=0\ (\mathrm{mod}\ p),\\
0,\ \ \ \ \ \ \ \tau_1=2,\ (\tau_2+2d)\neq0\ (\mathrm{mod}\ p),\\
-4,\ \ \ \ \ \tau_1=3,\ (\tau_2+3d)=0\ (\mathrm{mod}\ p),\\
4b,\ \ \ \ \ \ \tau_1=3,\ (\tau_2+3d)\ (\mathrm{mod}\ p)\in D_{0}\cup D_{2},\\
-4b,\ \ \ \ \tau_1=3,\ (\tau_2+3d)\ (\mathrm{mod}\ p)\in D_{1}\cup D_{3}.
\end{array}
\right.
$$
\end{lemma}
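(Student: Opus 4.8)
The plan is to compute $AC_s(\tau)$ directly from the interleaved structure and reduce it to auto- and cross-correlations of the Ding--Helleseth--Lam sequences $s^{(1)},s^{(2)},s^{(3)}$, which are then evaluated through cyclotomic numbers of order $4$. First I would unfold the interleaving following \cite{Gong}: writing each index as $i=4t+j$ with $0\le j\le 3$, $t\in\mathbf Z_p$, and $\tau=\tau_1+4\tau_2$, the shifted index is $i+\tau=4(t+\tau_2+\epsilon_j)+\big((j+\tau_1)\bmod 4\big)$, where $\epsilon_j\in\{0,1\}$ records whether the column index wraps past $3$. Summing over $i$ therefore splits $AC_s(\tau)$ into exactly four inner sums, one per column $j$, each a correlation $C_{u_j,u_{j'}}(\tau_2+\epsilon_j)$ between column sequences $u_j,u_{j'}$ with $j'=(j+\tau_1)\bmod 4$. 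Substituting $u_0=s^{(3)}+w(0)$, $u_1=\mathbf L^{d}(s^{(2)})+w(1)$, $u_2=\mathbf L^{2d}(s^{(1)})+w(2)$, $u_3=\mathbf L^{3d}(s^{(1)})+w(3)$ and absorbing the shifts by a change of summation variable turns each term into $(-1)^{w(j)+w(j')}$ times a DHL correlation at the shift $\tau_2+\epsilon_j+\alpha_{j'}-\alpha_{j}$, where $\alpha_m=md$.

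The decisive simplification, which I would highlight, comes from $4d\equiv 1 \pmod p$: in every case the four DHL shifts collapse to the single common value $\delta:=\tau_2+\tau_1 d \pmod p$, because whenever the column index wraps the carry $\epsilon_j=1$ together with the resulting $-4d$ cancel via $1-4d\equiv 0$. Thus for each $\tau_1$ the autocorrelation is a sum of four DHL correlations all evaluated at the same shift $\delta$, and it depends on $\tau_2$ only through the cyclotomic class of $\delta$ --- which is precisely how the lemma is organized (the cases $\tau_2+d$, $\tau_2+2d$, $\tau_2+3d$ and the memberships in $D_0\cup D_2$ versus $D_1\cup D_3$).

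Next I would dispose of the constants $w$. Because $w(0)=w(2)$ and $w(1)=w(3)$, the interleaved pattern $(w(0),w(1),w(2),w(3))$ is $2$-periodic, so across the four terms the factors $(-1)^{w(j)+w(j')}$ combine into a single global sign: it is $+1$ when $\tau$ is even ($\tau_1\in\{0,2\}$) and equal to $(-1)^{w(0)+w(1)}$ when $\tau$ is odd ($\tau_1\in\{1,3\}$). This is exactly where the hypotheses on $w$ are used, and it shows the out-of-phase magnitudes are independent of the choice of $w$. What remains is to evaluate, for each $\tau_1$, the four-term sum of DHL correlations at shift $\delta$: an autocorrelation sum for $\tau_1=0$, and sums of cross-correlations $C_{s^{(x)},s^{(y)}}(\delta)$ between the relevant pairs (for instance, for $\tau_1=1$ the pairs $(s^{(3)},s^{(2)})$, $(s^{(2)},s^{(1)})$, $(s^{(1)},s^{(1)})$, $(s^{(1)},s^{(3)})$) for $\tau_1=1,2,3$. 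Each such correlation I would expand as a character-type sum over the cyclotomic classes $D_0,\dots,D_3$ and rewrite through the cyclotomic numbers $(i,j)_4$ for $p=a^2+4b^2$ with $k$ odd.

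The hard part will be this last step: showing that each four-term cyclotomic sum collapses to exactly $-4$, $4$, $\pm4b$, or $0$ according to the class of $\delta$. This needs the full table of order-$4$ cyclotomic numbers together with the identities valid when $k$ is odd and $b=\pm1$, in particular those that distinguish $D_0\cup D_2$ from $D_1\cup D_3$ and produce the signed value $4b$. The cross-correlations between \emph{distinct} DHL sequences are the delicate ingredient, since their supports are unions of different cosets and the evaluation mixes several cyclotomic numbers whose large contributions must cancel (for example, in the case $\tau_1=1$ the $j=2$ term equals $AC_{s^{(1)}}(\delta)$, which is $p$ when $\delta\equiv 0$ and must be offset by the remaining three terms to leave $-4$). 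The autocorrelation case $\tau_1=0$ and the self-paired contributions are comparatively routine. Once these cyclotomic evaluations are assembled, summing the four contributions in each case and applying the global sign $(-1)^{w(0)+w(1)}$ on odd $\tau$ yields the tabulated distribution.
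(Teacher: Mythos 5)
There is nothing in this paper to compare your proposal against: Lemma~\ref{O-A-M} is imported verbatim from \cite{Su wei} and the paper gives no proof of it, only the citation. Judged on its own terms, your outline reconstructs the standard (and, as far as the original goes, the actual) route correctly. The skeleton is sound: with $s(4t+j)=u_j(t)$, the term for column $j$ carries base-sequence shift $\tau_2+\epsilon_j+(j'-j)d$ with $j'-j=\tau_1-4\epsilon_j$, so the carry contributes $\epsilon_j(1-4d)\equiv 0\pmod p$; this is exactly the one place $4d\equiv 1\pmod p$ enters, and it is what makes all four terms collapse to the common shift $\delta=\tau_2+\tau_1 d$, matching the case structure of the table. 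Your parity analysis of $w$ and your pair list for $\tau_1=1$ are also correct.

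Two caveats before this could count as a proof. First, your global-sign observation actually exposes something the lemma's table suppresses: the printed signs for odd $\tau_1$ hold only for one parity of $w(0)+w(1)$. Concretely, for $\tau_1=1$ and $\delta\equiv 0$ the four shift-zero correlations are $C_{s^{(3)},s^{(2)}}(0)=2-p$, $C_{s^{(2)},s^{(1)}}(0)=1$, $AC_{s^{(1)}}(0)=p$ and $C_{s^{(1)},s^{(3)}}(0)=1$ (computed from the symmetric differences of the supports $D_1\cup D_2$, $D_0\cup D_3$, $D_0\cup D_1$), summing to $+4$; the tabulated $-4$ therefore requires the global sign $(-1)^{w(0)+w(1)}=-1$, which is consistent with the choice $w(0)=w(2)=0$, $w(1)=w(3)=1$ made later in the proof of Lemma~\ref{main-lemma-3}, but means that for the opposite parity all odd-$\tau_1$ entries flip sign (the magnitude statement is unaffected). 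If you execute your plan you must fix this convention explicitly or state the table up to that sign. Second, what you candidly label the hard part --- evaluating the four-term sums of DHL auto- and cross-correlations through the order-$4$ cyclotomic numbers for $p=a^2+4b^2$, $b=\pm 1$, $k$ odd (i.e.\ $p\equiv 5\pmod 8$), including the cancellations that leave exactly $0$, $\pm 4$, $\pm 4b$ --- is the bulk of the work and is deferred entirely, so as written this is a correct and well-aimed plan rather than a complete proof.
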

In this paper, the method of Hu \cite{Hu Honggang} will be employed to analyze the 2-adic complexity of $s$ and it can be described as the following Lemma \ref{Sun} which has also been used in our another work \cite{Sun Yuhua-17}.
\begin{lemma}\label{Sun}\cite{Hu Honggang,Sun Yuhua-17}
Let $s=\Big(s(0),s(1),\cdots,s(N-1)\Big)$ be a binary sequence of period $N$, $S(x)=\sum\limits_{i=0}^{N-1}s(i)x^i\in \mathbb{Z}[x]$ and $T(x)=\sum\limits_{i=0}^{N-1}(-1)^{s(i)}x^i\in \mathbb{Z}[x]$. Then
\begin{eqnarray}
-2S(x)T(x^{-1})\equiv N+\sum\limits_{\tau=1}^{N-1}AC(\tau)x^{\tau}-T(x^{-1})\left(\sum\limits_{i=0}^{N-1}x^i\right)\pmod{x^N-1}.\nonumber
\end{eqnarray}
\end{lemma}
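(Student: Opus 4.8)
The plan is to treat this as an elementary identity in the ring $\mathbb{Z}[x]/(x^N-1)$, driven by two observations: a linear relation between $T(x)$ and $S(x)$, and an evaluation of the product $T(x)T(x^{-1})$ in terms of the autocorrelation values.

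First I would record the pointwise identity $(-1)^{s(i)}=1-2s(i)$, which holds because $s(i)\in\{0,1\}$. Multiplying by $x^i$ and summing over $i$ gives the polynomial relation
$$T(x)=\Big(\sum_{i=0}^{N-1}x^i\Big)-2S(x),$$
equivalently $2S(x)=\big(\sum_{i=0}^{N-1}x^i\big)-T(x)$. Writing $U(x)=\sum_{i=0}^{N-1}x^i$ for brevity, this lets me rewrite the left-hand side of the claim as
$$-2S(x)T(x^{-1})=-\bigl(U(x)-T(x)\bigr)T(x^{-1})=T(x)T(x^{-1})-U(x)T(x^{-1}).$$
Since multiplication is commutative we have $U(x)T(x^{-1})=T(x^{-1})\big(\sum_{i=0}^{N-1}x^i\big)$, so the entire statement reduces to the single congruence
$$T(x)T(x^{-1})\equiv N+\sum_{\tau=1}^{N-1}AC(\tau)x^{\tau}\pmod{x^N-1}.$$

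The second and main step is to establish this congruence by expanding the product and collecting powers of $x$ modulo $x^N-1$. Setting $t(i)=(-1)^{s(i)}$, I would write
$$T(x)T(x^{-1})=\sum_{i=0}^{N-1}\sum_{j=0}^{N-1}t(i)t(j)\,x^{i-j}.$$
Reducing each exponent modulo $N$ and grouping the terms by $\tau\equiv(i-j)\pmod N$, the coefficient of $x^{\tau}$ becomes $\sum_{i=0}^{N-1}t(i)t(i-\tau)$, where the index $i-\tau$ is read modulo $N$. The substitution $i\mapsto i+\tau$ identifies this with $\sum_{i}t(i)t(i+\tau)=AC(\tau)$, the autocorrelation of the associated $\{\pm 1\}$ sequence, while for $\tau=0$ the coefficient is $\sum_i t(i)^2=N$. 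Substituting these coefficients yields the displayed congruence, and combining it with the reduction from the first step gives exactly the statement of the lemma.

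The computation is entirely elementary, so there is no deep obstacle; the only point requiring care is the index bookkeeping modulo $N$ in the second step. Specifically, I must confirm that the wrap-around induced by $x^N\equiv 1$ is precisely what turns the finite convolution $\sum_{i-j\equiv\tau}t(i)t(j)$ into the \emph{periodic} autocorrelation $AC(\tau)$, and that the symmetry invoked in the substitution $i\mapsto i+\tau$ is consistent with the paper's sign convention for $AC(\tau)$. Once the coefficient of each $x^{\tau}$ is correctly matched, no estimates, case analysis, or further lemmas are needed.
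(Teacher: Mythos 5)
Your proposal is correct, and it is exactly the argument underlying this lemma: the paper itself states it without proof (citing Hu's comment paper and the authors' earlier work), where the proof proceeds just as you do, via $T(x)=\sum_{i=0}^{N-1}x^i-2S(x)$ together with the classical congruence $T(x)T(x^{-1})\equiv N+\sum_{\tau=1}^{N-1}AC(\tau)x^{\tau}\pmod{x^N-1}$ obtained by collecting the convolution coefficients. Your index bookkeeping is also consistent with the paper's convention, since $\sum_i t(i)t(i-\tau)=\sum_i t(i)t(i+\tau)=AC(\tau)$ after the shift $i\mapsto i+\tau$, so no gap remains.
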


\begin{lemma}\label{main-lemma}
Let $s=\Big(s(0),s(1),\cdots,s(4p-1)\Big)$ be the sequence of period $N=4p$ defined in Eq. (\ref{sequence-de}), $S(x)=\sum_{i=0}^{4p-1}s(i)x^{i}\in \mathbb{Z}[x]$ and $T(x)=\sum_{i=0}^{4p-1}(-1)^{s(i)}x^{i}\in \mathbb{Z}[x]$, where $4d\equiv1\ \mathrm{mod}\ p$. Then
\begin{eqnarray}
S(2)T(2^{-1})\equiv && 2\Big[\frac{2^{4p}-1}{2^4-1}+(2^{2p}+1)(2^{p}-1)\nonumber\\
&&-2^{p}(2^{2p}-1)b\sum_{i\in \mathbf{Z}_{p}^{\ast}}(\frac{i}{p})2^{4i}-p  \Big]\pmod{2^{4p}-1},\nonumber
\end{eqnarray}
where $\big(\frac{i}{p}\big)$ is the Legendre symbol.
\end{lemma}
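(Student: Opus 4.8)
The plan is to feed $x=2$ into the identity of Lemma \ref{Sun} and thereby reduce the whole statement to a single exponential sum formed from the autocorrelation values of Lemma \ref{O-A-M}. Putting $x=2$ and $N=4p$, and noting that $\sum_{i=0}^{4p-1}2^{i}=2^{4p}-1\equiv0\pmod{2^{4p}-1}$, the entire term $T(2^{-1})\sum_{i}2^{i}$ vanishes, so that
\begin{equation*}
-2S(2)T(2^{-1})\equiv 4p+\sum_{\tau=1}^{4p-1}AC_s(\tau)2^{\tau}\pmod{2^{4p}-1}.
\end{equation*}
It therefore suffices to evaluate $A:=\sum_{\tau=1}^{4p-1}AC_s(\tau)2^{\tau}$ explicitly and then multiply by $-2^{-1}$; because every value in Lemma \ref{O-A-M} is a multiple of $4$, the integer $4p+A$ is divisible by $4$, so $-(4p+A)/2$ is automatically twice an integer, matching the outer factor $2$ in the claimed formula.

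To compute $A$, I would write $\tau=\tau_1+4\tau_2$ with $\tau_1\in\{0,1,2,3\}$, so that $2^{\tau}=2^{\tau_1}16^{\tau_2}$, and split the sum into the four blocks dictated by Lemma \ref{O-A-M}. The key simplification is that $16^{p}=2^{4p}\equiv1\pmod{2^{4p}-1}$, so exponents of $16$ may be reduced modulo $p$ freely. Hence in each block $\tau_1\in\{1,2,3\}$, writing the governing condition through $j=(\tau_2+\tau_1 d)\bmod p$ and replacing $16^{\tau_2}$ by $16^{\,j-\tau_1 d}$ converts every cyclotomic-class sum into $16^{-\tau_1 d}$ times one of $Q=\sum_{j\in D_0\cup D_2}16^{j}$ or $R=\sum_{j\in D_1\cup D_3}16^{j}$. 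Since $p\equiv1\pmod4$, the sets $D_0\cup D_2$ and $D_1\cup D_3$ are exactly the quadratic residues and non-residues modulo $p$, so $Q-R=\sum_{i\in\mathbf{Z}_p^{\ast}}\left(\frac{i}{p}\right)2^{4i}$, the Legendre sum in the statement; the block $\tau_1=0$ contributes $-4\big(\frac{2^{4p}-1}{2^4-1}-1\big)$, and the block $\tau_1=2$ contributes only the single term $16\cdot16^{-2d}$.

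The remaining ingredient is to resolve $16^{-d},16^{-2d},16^{-3d}$ into plain powers of $2$. For this I would combine $4d\equiv1\pmod p$ with $p=4k+1$, which forces $d\equiv3k+1\pmod p$ and hence $4d\equiv3p+1\pmod{4p}$; consequently $16^{-d}=2^{-4d}\equiv2^{p-1}$, and similarly $16^{-2d}\equiv2^{2p-2}$ and $16^{-3d}\equiv2^{3p-3}$. Substituting, the coefficient of $b(Q-R)$ collapses to $-8\cdot2^{p-1}+32\cdot2^{3p-3}=4\cdot2^{p}(2^{2p}-1)$, while the remaining $b$-free terms assemble into $-4\cdot\frac{2^{4p}-1}{2^4-1}-4(2^{2p}+1)(2^p-1)$. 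Thus
\begin{equation*}
A=-4\cdot\frac{2^{4p}-1}{2^4-1}-4(2^{2p}+1)(2^p-1)+4\cdot2^{p}(2^{2p}-1)\,b(Q-R),
\end{equation*}
and adding $4p$ and dividing by $-2$ reproduces exactly the congruence claimed.

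I expect the main difficulty to be bookkeeping rather than conceptual. The two delicate points are: first, tracking the shift $-\tau_1 d$ inside the exponents while keeping the correspondence between the cyclotomic classes $D_0\cup D_2$, $D_1\cup D_3$ and the Legendre symbol with the right sign in each of the blocks $\tau_1=1$ and $\tau_1=3$; and second, establishing the clean reduction $4d\equiv3p+1\pmod{4p}$, which is precisely what makes the powers of $16$ collapse onto the single powers of $2$ appearing in the final expression. Once these are in place, the assembly of terms is routine.
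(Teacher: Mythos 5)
Your proposal is correct and follows essentially the same route as the paper's proof: substitute $x=2$ into Lemma \ref{Sun}, split the autocorrelation sum of Lemma \ref{O-A-M} into the four $\tau_1$-blocks, shift the $\tau_2$-indices by $-\tau_1 d$ (the paper's shifts $\frac{p-1}{4},\frac{p-1}{2},\frac{3(p-1)}{4}$ are exactly your $-d,-2d,-3d \bmod p$, via $4d=3p+1$), and identify $Q-R$ with the Legendre sum since $D_0\cup D_2$ are the quadratic residues. Your coefficient computations ($-8\cdot 2^{p-1}+32\cdot 2^{3p-3}=4\cdot 2^p(2^{2p}-1)$, and the $b$-free terms collapsing to $-4\frac{2^{4p}-1}{15}-4(2^{2p}+1)(2^p-1)$) agree with the paper's.
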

\begin{proof}
Above all, by Lemma \ref{Sun}, we know that
\begin{eqnarray}
S(2)T(2^{-1})&&\equiv-2p-\frac{1}{2}\sum_{\tau=1}^{4p-1}AC_s(\tau)2^{\tau}\ (\mathrm{mod}\ 2^{4p}-1)\label{main-conc}
\end{eqnarray}
Note that $d=\frac{3p+1}{4}$ for $p=4k+1$ and $4d\equiv1\ \mathrm{mod}\ p$. Then, for $0\leq\tau_2\leq p-1$, we have
\begin{eqnarray}
&&\tau_2+d=0\ (\mathrm{mod}\ p)\Leftrightarrow\tau_2=\frac{p-1}{4}\ \mathrm{mod}\ p,\label{trans-1}\\
&&\tau_2+2d=0\ (\mathrm{mod}\ p)\Leftrightarrow\tau_2=\frac{p-1}{2}\ \mathrm{mod}\ p,\label{trans-2}\\
&&\tau_2+3d=0\ (\mathrm{mod}\ p)\Leftrightarrow\tau_2=\frac{3(p-1)}{4}\ \mathrm{mod}\ p,\label{trans-3}\\
&&(\tau_2+d)\ (\mathrm{mod}\ p)\in D_{0}\cup D_{2}\Leftrightarrow\tau_2\in\Big(D_{0}\cup D_{2}+\frac{p-1}{4}\Big),\label{trans-4}\\
&&(\tau_2+2d)\ (\mathrm{mod}\ p)\in D_{0}\cup D_{2}\Leftrightarrow\tau_2\in\Big(D_{0}\cup D_{2}+\frac{p-1}{2}\Big),\label{trans-5}\\
&&(\tau_2+3d)\ (\mathrm{mod}\ p)\in D_{0}\cup D_{2}\Leftrightarrow\tau_2\in\Big(D_{0}\cup D_{2}+\frac{3(p-1)}{4}\Big),\label{trans-6}\\
&&(\tau_2+d)\ (\mathrm{mod}\ p)\in D_{1}\cup D_{3}\Leftrightarrow\tau_2\in\Big(D_{1}\cup D_{3}+\frac{p-1}{4}\Big),\label{trans-7}\\
&&(\tau_2+2d)\ (\mathrm{mod}\ p)\in D_{1}\cup D_{3}\Leftrightarrow\tau_2\in\Big(D_{1}\cup D_{3}+\frac{p-1}{2}\Big),\label{trans-8}\\
&&(\tau_2+3d)\ (\mathrm{mod}\ p)\in D_{1}\cup D_{3}\Leftrightarrow\tau_2\in\Big(D_{1}\cup D_{3}+\frac{3(p-1)}{4}\Big),\label{trans-9}
\end{eqnarray}
where $D_{0}\cup D_{2}+x=\{a+x\ \mathrm{mod}\ p|a\in D_0\cup D_2\}$ and $D_{1}\cup D_{3}+x=\{a+x\ \mathrm{mod}\ p|a\in D_1\cup D_3\}$.
By Lemma \ref{O-A-M} and Eqs. (\ref{trans-1}-\ref{trans-9}), we know that
\begin{eqnarray}
\sum_{\tau=1}^{4p-1}AC_s(\tau)2^{\tau}&&=\sum_{\tau_{2}=1}^{p-1}AC_s(4\tau_{2})2^{4\tau_{2}}+\sum_{\tau_{1}=1}^{3}\sum_{\tau_{2}=0}^{p-1}AC_s(\tau_1+4\tau_{2})2^{\tau_1+4\tau_{2}}\nonumber\\
&&=4-4\times\sum_{\tau_{2}=0}^{p-1}2^{4\tau_{2}}+\sum_{\tau_{1}=1}^{3}\sum_{\tau_{2}=0}^{p-1}AC_s(\tau_1+4\tau_{2})2^{\tau_1+4\tau_{2}}\nonumber\\
&&=4-4\times\frac{2^{4p}-1}{2^{4}-1}-4\times2^{1+4\times\frac{p-1}{4}}\nonumber\\
&&+4\times2^{2+4\times\frac{p-1}{2}}+0-4\times2^{3+4\times\frac{3(p-1)}{4}}\nonumber\\
&&-4b\sum_{\tau_2\in(D_0\cup D_2+\frac{p-1}{4})}2^{1+4\tau_2}+4b\sum_{\tau_2\in(D_1\cup D_3+\frac{p-1}{4})}2^{1+4\tau_2}\nonumber\\
&&+4b\sum_{\tau_2\in(D_0\cup D_2+\frac{3(p-1)}{4})}2^{3+4\tau_2}-4b\sum_{\tau_2\in(D_1\cup D_3+\frac{3(p-1)}{4})}2^{3+4\tau_2}\nonumber\\
&&\equiv4-4\times\frac{2^{4p}-1}{2^{4}-1}-4\times2^{p}+4\times2^{2p}-4\times2^{3p}\nonumber\\
&&-4b\times2^p\Big(\sum_{\tau_2^{\prime}\in D_0\cup D_2}2^{4\tau_2^{\prime}}-\sum_{\tau_2^{\prime}\in D_1\cup D_3}2^{4\tau_2^{\prime}}\Big)\nonumber\\
&&+4b\times2^{3p}\Big(\sum_{\tau_2^{\prime}\in D_0\cup D_2}2^{4\tau_2^{\prime}}-\sum_{\tau_2^{\prime}\in D_1\cup D_3}2^{4\tau_2^{\prime}}\Big)\nonumber\\
&&\equiv-4\Big[\frac{2^{4p}-1}{2^{4}-1}+(1+2^{2p})(2^{p}-1)\nonumber\\
&&-2^p(2^{2p}-1)b\sum_{\tau_2^{\prime}\in Z_p^{\ast}}\big(\frac{\tau_2}{p}\big)2^{4\tau_2^{\prime}}\Big](\mathrm{mod}\ 2^{4p}-1).\label{Auto-sum}
\end{eqnarray}
Substituting the Eq. (\ref{Auto-sum}) into the Eq. (\ref{main-conc}), one can obtain the result.
\end{proof}

\begin{lemma}\label{main-lemma-3}
Let the symbols be the same as those in Lemma \ref{main-lemma}. Then we have
(i) $\mathrm{gcd}(S(2),3)=1$,\ $\mathrm{gcd}(S(2),5)=5$; (ii) $3|2^{2p}-1,\ 5|2^{2p}+1$.
\end{lemma}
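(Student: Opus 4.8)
The plan is to treat the two parts in the order (ii) then (i), since the divisibilities in (ii) are exactly what licenses reducing the congruence of Lemma~\ref{main-lemma} modulo $3$ and modulo $5$. Part (ii) is an immediate order-of-$2$ computation: since $2^2\equiv 1\pmod 3$ one gets $2^{2p}=(2^2)^p\equiv 1\pmod 3$, so $3\mid 2^{2p}-1$; and since $2^2\equiv -1\pmod 5$ and $p$ is odd, $2^{2p}=(2^2)^p\equiv(-1)^p=-1\pmod 5$, so $5\mid 2^{2p}+1$. In particular both $3$ and $5$ divide $2^{4p}-1=(2^{2p}-1)(2^{2p}+1)$.

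For the claim $\gcd(S(2),5)=5$ in (i) I would compute $S(2)$ modulo $5$ directly from the interleaved structure. Writing the index as $i=4t+j$ with $0\le t\le p-1$ and $0\le j\le 3$, we have $s(4t+j)=c_j(t)$, where $c_0,\dots,c_3$ are the four constituent columns $s^{(3)}+w(0),\ \mathbf{L}^{d}(s^{(2)})+w(1),\ \mathbf{L}^{2d}(s^{(1)})+w(2),\ \mathbf{L}^{3d}(s^{(1)})+w(3)$. Since $2^4\equiv 1\pmod 5$ we get $2^{4t+j}\equiv 2^{j}$, hence $S(2)\equiv\sum_{j=0}^{3}2^{j}W_j\pmod 5$, where $W_j=\sum_{t}c_j(t)$ is the Hamming weight of the $j$-th column. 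Each Ding--Helleseth--Lam sequence $s^{(i)}$ has weight $|D_a\cup D_b|=(p-1)/2$, and the shift operator $\mathbf{L}^{\bullet}$ preserves weight, so adding the constant $w(j)$ gives $W_j=(p-1)/2+w(j)$. Substituting, the constant part contributes $\tfrac{p-1}{2}\sum_j 2^j=\tfrac{p-1}{2}\cdot 15\equiv 0$, and the symmetry $w(0)=w(2),\,w(1)=w(3)$ collapses the remainder to $w(0)+2w(1)+4w(0)+8w(1)=5\big(w(0)+2w(1)\big)\equiv 0\pmod 5$. Thus $5\mid S(2)$ and, as $5$ is prime, $\gcd(S(2),5)=5$.

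For $\gcd(S(2),3)=1$ I would instead exploit the congruence already established in Lemma~\ref{main-lemma}. Since $3\mid 2^{4p}-1$ by (ii), I may reduce $S(2)T(2^{-1})\equiv 2M\pmod{2^{4p}-1}$ modulo $3$. Evaluating the bracket $M$ term by term: the geometric sum $\frac{2^{4p}-1}{2^4-1}=\sum_{t=0}^{p-1}2^{4t}\equiv p$; the product $(2^{2p}+1)(2^p-1)\equiv 2\cdot 1=2$ using $2^{2p}\equiv1$ and $2^p\equiv-1$; the Legendre-symbol term carries the factor $2^{2p}-1\equiv 0$ and so drops out entirely; and the final $-p$ cancels the $p$. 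Hence $M\equiv 2$ and $2M\equiv 1\pmod 3$, so $S(2)T(2^{-1})\equiv 1\pmod 3$, which forces $3\nmid S(2)$ and gives $\gcd(S(2),3)=1$.

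The main obstacle is the $5$-adic part. Reducing Lemma~\ref{main-lemma} modulo $5$ only yields $S(2)T(2^{-1})\equiv 0\pmod 5$ (indeed $M\equiv p-p=0$, since $2^{2p}+1\equiv 0$ kills the second term and $\sum_{i\in\mathbf{Z}_p^{\ast}}\big(\tfrac{i}{p}\big)2^{4i}\equiv\sum_i\big(\tfrac ip\big)=0$ kills the third), and one checks that $T(2^{-1})\equiv 0\pmod 5$ as well; so the product being divisible by $5$ gives no information about which factor is, and the structural weight computation above is essential to pin down $5\mid S(2)$. A secondary point to watch is the non-vanishing modulo $3$: it relies on the autocorrelation distribution of Lemma~\ref{O-A-M}, equivalently on $w(0)\neq w(1)$, so the direct evaluation $S(2)\equiv 2\big(w(0)-w(1)\big)\pmod 3$ should be cross-checked against the congruence route for consistency. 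The remaining care is bookkeeping: confirming the column weights $(p-1)/2+w(j)$ and the vanishing of the Legendre sum modulo $3$ and $5$.
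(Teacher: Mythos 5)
Your proof is correct, and it reaches both halves of (i) by routes that differ from the paper's. The paper handles (i) with a single direct evaluation: it fixes, ``without loss of generality,'' $w(0)=w(2)=0$, $w(1)=w(3)=1$, writes $S(2)$ explicitly as a sum of powers of $2$ over the shifted (and partly complemented) Ding--Helleseth--Lam supports, reduces modulo $2^{4p}-1$, and then computes modulo $3$ and modulo $5$ using $2^4\equiv 1$ and the value of $2^p$ modulo each prime; in substance this is the same column-weight count as yours, but pinned to one particular $w$. Your mod-$5$ argument is tidier and slightly more general: it uses only the column weights $(p-1)/2+w(j)$ and the symmetry $w(0)=w(2)$, $w(1)=w(3)$, hence covers all admissible $w$ simultaneously and never touches the supports. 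Your mod-$3$ argument is genuinely different: rather than evaluating $S(2)$, you reduce the congruence of Lemma~\ref{main-lemma} modulo $3$ to get $S(2)T(2^{-1})\equiv 1\pmod 3$; this is legitimate (no circularity, since Lemma~\ref{main-lemma} rests only on Lemmas~\ref{O-A-M} and~\ref{Sun}), though it inherits that lemma's reliance on the exact autocorrelation distribution, whereas the paper's evaluation is self-contained. Your two side observations are also correct and worth recording: $T(2^{-1})\equiv 0\pmod 5$ really does make the congruence route powerless for the $5$-part (some structural computation is unavoidable there, which is connected to the authors' admission in the Summary that they cannot settle $\gcd\bigl(S(2),2^{2p}+1\bigr)=5$); and your cross-check $S(2)\equiv 2\bigl(w(0)-w(1)\bigr)\pmod 3$ shows that the claim $\gcd(S(2),3)=1$ actually fails when $w(0)=w(1)$ (e.g.\ $w=(0,0,0,0)$ forces $3\mid S(2)$), so the paper's ``without loss of generality'' conceals a genuine restriction: the distribution quoted in Lemma~\ref{O-A-M}, whose odd-$\tau_1$ values flip sign when $w(0)=w(1)$, and hence the whole chain of lemmas, implicitly assume $w(0)\neq w(1)$.
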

\begin{proof} (i) Without loss of generality, let $w(0)=w(2)=0$ and $w(1)=w(3)=1$. Since $d=\frac{3p+1}{4}$, we have $-d=\frac{p-1}{4}\ \mathrm{mod}\ p$. Then
\begin{eqnarray}
S(2)=&&\sum_{i\in D_{1}\cup D_2}2^{4i}+\sum_{i\in \{\frac{p-1}{4}\}\cup(D_{1}\cup D_2+\frac{p-1}{4})}2^{4i+1}\nonumber\\
&&+\sum_{i\in (D_{0}\cup D_1+\frac{p-1}{2})}2^{4i+2}+\sum_{i\in \{\frac{3(p-1)}{4}\}\cup(D_{2}\cup D_3+\frac{3(p-1)}{4})}2^{4i+3}\nonumber\\
\equiv&&\sum_{i\in D_{1}\cup D_2}2^{4i}+2^p+2^p\sum_{i\in D_{1}\cup D_2}2^{4i}\nonumber\\
&&+2^{2p}\sum_{i\in (D_{0}\cup D_1)}2^{4i}+2^{3p}+2^{3p}\sum_{i\in D_{2}\cup D_3}2^{4i}(\mathrm{mod}\ 2^{4p}-1).\nonumber
\end{eqnarray}
Note that $2^p\equiv-1\ \mathrm{mod}\ 3$, $2^4=16\equiv1\ \mathrm{mod}\ 3$, $2^p\equiv2\ \mathrm{mod}\ 5$, and $2^4=16\equiv1\ \mathrm{mod}\ 5$. We have
\begin{eqnarray}
S(2)\equiv&&\frac{p-1}{2}-1-\frac{p-1}{2}+\frac{p-1}{2}-1-\frac{p-1}{2}\equiv-2\ (\mathrm{mod}\ 3),\nonumber\\
S(2)\equiv&&\frac{p-1}{2}+2+2\times\frac{p-1}{2}-\frac{p-1}{2}+3+3\times\frac{p-1}{2}\equiv0\ (\mathrm{mod}\ 5).\nonumber
\end{eqnarray}
The results follow. \\
(ii) Note that $2^{2p}-1=4^p-1=(3+1)^p-1=0\ (\mathrm{mod}\ 3)$ and $2^{2p}+1=(5-1)^p+1\equiv0\ (\mathrm{mod}\ 5)$. The results hold.
\end{proof}

\begin{lemma}\label{main-lemma-2}
Let $p$ be an odd prime. Then we have $\mathrm{gcd}(p,2^p-1)=1$ and $\mathrm{gcd}(p+4,\frac{2^p+1}{3})=1$.
\end{lemma}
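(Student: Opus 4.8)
The first claim is immediate from Fermat's little theorem: since $2^{p}\equiv 2\pmod{p}$, we have $2^{p}-1\equiv 1\pmod{p}$, so $p\nmid 2^{p}-1$, and as $p$ is prime this forces $\mathrm{gcd}(p,2^{p}-1)=1$. For the second claim I would first note that $\frac{2^{p}+1}{3}$ is genuinely an integer, since $2\equiv-1\pmod 3$ together with $p$ odd gives $3\mid 2^{p}+1$. The plan is then to take an arbitrary prime divisor $q$ of $m:=\frac{2^{p}+1}{3}$ and show it cannot simultaneously divide $p+4$.

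The key step is to pin down the possible prime divisors of $m$. Since $2^{p}+1=3m$, any prime $q\mid m$ satisfies $q\mid 2^{p}+1$, i.e.\ $2^{p}\equiv-1\pmod q$; hence $2^{2p}\equiv1\pmod q$ while $2^{p}\not\equiv1\pmod q$ (note $q$ is odd, so $-1\not\equiv1$). Writing $t=\mathrm{ord}_{q}(2)$, this gives $t\mid 2p$ but $t\nmid p$, so---the divisors of $2p$ being $1,2,p,2p$---either $t=2$ or $t=2p$. The case $t=2$ yields $q\mid 2^{2}-1=3$, i.e.\ $q=3$; the case $t=2p$ yields $2p\mid q-1$ by Fermat, hence $q\equiv1\pmod{2p}$ and in particular $q\geq 2p+1$. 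Thus every prime divisor of $m$ is either $3$ or at least $2p+1$.

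Finally I would combine this with the elementary bound $q\leq p+4$ coming from $q\mid p+4$. Suppose $q\mid\mathrm{gcd}(p+4,m)$. If $q\neq 3$, then $q\geq 2p+1$ and $q\leq p+4$ force $2p+1\leq p+4$, i.e.\ $p\leq 3$, so $p=3$; but for $p=3$ we have $m=3$, which has no prime factor other than $3$, a contradiction. If $q=3$, then $3\mid m$ means $9\mid 2^{p}+1$, and a short computation using $\mathrm{ord}_{9}(2)=6$ shows $2^{p}\equiv-1\pmod 9$ only when $p\equiv3\pmod 6$, i.e.\ (for an odd prime) $p=3$; but then $p+4=7$ is not divisible by $3$, again a contradiction. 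Hence no common prime divisor exists and $\mathrm{gcd}(p+4,m)=1$. The only genuinely delicate points are the order-of-$2$ argument isolating the congruence $q\equiv1\pmod{2p}$ and the careful treatment of the small prime $3$ alongside the boundary case $p=3$; the rest is routine.
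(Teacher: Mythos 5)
Your proof is correct and follows essentially the same route as the paper: Fermat's little theorem for $\mathrm{gcd}(p,2^p-1)=1$, and for the second gcd, the multiplicative order of $2$ modulo a common prime factor with the same case analysis on divisors of $2p$ (order $2$ forcing $q=3$ and the check that $9\nmid 2^p+1$; order $2p$ forcing $q\geq 2p+1$, contradicting $q\leq p+4$; and $2^p\equiv-1$ ruling out the intermediate case). The only differences are organizational: you handle $p=3$ inside the case analysis rather than verifying it upfront, and you exclude order $p$ by noting the order cannot divide $p$ at all, where the paper treats it as a separate contradiction.
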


\begin{proof} Above all, we will prove $\mathrm{gcd}(p,2^p-1)=1$. Suppose $\mathrm{gcd}(p,2^p-1)>1$. Then $p|2^p-1$. But we know that $p|2^{p-1}-1$ by Fermat Theorem and $\mathrm{gcd}(2^p-1,2^{p-1}-1=2^{\mathrm{gcd}(p,p-1)}-1=1$ which is a contradiction to $\mathrm{gcd}(p,2^p-1)>1$.

Next, we will prove $\mathrm{gcd}(p+4,\frac{2^p+1}{3})=1$. It is easy to verify that this holds for $p=3$. Suppose $p\geq5$ and $r>1$ is a common prime factor of $p+4$ and $\frac{2^p+1}{3}$. Then $r\leq p+4<2p$. Let $\mathrm{ord}_r(2)$ be the multiplicative order of 2 modulo $r$. Then $r|2^{2p}-1$ which implies $\mathrm{ord}_r(2)|2p$. Therefore, $\mathrm{ord}_r(2)=2$, $p$ or $2p$. In the following, we consider the three cases respectively.
\begin{itemize}
\item[(1)]$\mathrm{ord}_r(2)=2$. In this case, it is obvious that $r=3$. However, we will prove that $3\nmid\frac{2^p+1}{3}$, i.e., $9\nmid2^p+1$. Note that $p\geq5$ is a prime. Then $p=6t+1$ or $6t+5$ for some non-negative integer $t$. Without loss of generality, let $p=6t+1$. Then $2^p+1=2^{6t+1}+1=2\times2^{6t}+1\equiv3\pmod{9}$ which implies that $9\nmid 2^p+1$. Similarly, $9\nmid 2^p+1$ for $p=6t+5$. This contradict to that $r$ is a common prime factor of $p+4$ and $\frac{2^p+1}{3}$.\\
\item[(2)]$\mathrm{ord}_r(2)=p$. In this case, we have $r|2^p-1$ which contradict to $r|\frac{2^p+1}{3}$.\\
\item[(3)]$\mathrm{ord}_r(2)=2p$. Note that $r|2^{r-1}-1$ by Fermat Theorem. Then we have $r-1\geq2p$, i.e., $r\geq2p+1$ which contradict to $r\leq p+4<2p$.
\end{itemize}
The result follows.
\end{proof}

Combining the above Lemmas \ref{main-lemma}-\ref{main-lemma-2}, we can prove our main result.

\begin{proof}[Proof of Theorem~\ref{main result}] Firstly, by Lemma \ref{main-lemma}, we have
\begin{eqnarray}
S(2)T(2^{-1})\equiv 2(2^p-1)-p\pmod{\frac{2^{2p}-1}{3}}.
\end{eqnarray}
Note that $3|2^p+1$ and $3\nmid 2^p-1$. Then, Furthermore, we have
\begin{eqnarray}
S(2)T(2^{-1})\equiv -p\pmod{2^p-1}
\end{eqnarray}
and
\begin{eqnarray}
S(2)T(2^{-1})\equiv -(p+4)\pmod{\frac{2^p+1}{3}}.
\end{eqnarray}
By Lemma \ref{main-lemma-2}, we know that
\begin{eqnarray}
\mathrm{gcd}(-p,2^p-1)=\mathrm{gcd}(p,2^p-1)=1
\end{eqnarray}
and
\begin{eqnarray}
\mathrm{gcd}\Big(-(p+4),\frac{2^p+1}{3}\Big)=\mathrm{gcd}\Big(p+4,\frac{2^p+1}{3}\Big)=1.
\end{eqnarray}
Note that $\mathrm{gcd}\Big(S(2),\frac{2^{2p}-1}{3}\Big)\leq \mathrm{gcd}\Big(S(2)T(2^{-1},\frac{2^{2p}-1}{3}\Big)=1$. By Lemma \ref{main-lemma-3}, we know $\mathrm{gcd}\Big(S(2),3\Big)=1$. Then we have $\mathrm{gcd}\Big(S(2),2^{2p}-1\Big)=1$. Therefore, we get $\frac{2^{4p}-1}{\mathrm{gcd}\Big(S(2),2^{4p}-1\Big)}\geq 2^{2p}-1$.

Finally, from Lemma \ref{main-lemma-3}, we have $5|\mathrm{gcd}\Big(S(2),2^{4p}-1\Big).$
By the Eq. (\ref{2-adic calculation}), the result follows.
\end{proof}
\begin{example}
Let $p=13$ and $N=4p=52$. By a Mathematica Program, we get the corresponding $\mathrm{gcd}\big(S(2), 2^{52}-1)=5$ and $\Phi(s)=49$, i.e., the 2-adic complexity arrives almost the upper bound in Theorem 1.
\end{example}

\section{Summary}
In this paper, we study the 2-adic complexity of a class of binary sequences with optimal autocorrelation magnitude and our result shows
that the 2-adic complexity of this class of sequences is no less than the half of its period which implies that these sequences can resist
 the RAA for FCSRs.

In fact, several classes of binary sequences with optimal autocorrelation magnitude have been obtained from Construction 1 in \cite{Su wei} by using different $(s_{0},s_{1},s_{2},s_{3})$ and $w=(w(0),w(1),w(2),w(3))$ and we can also show that the corresponding 2-adic complexity of each class of these sequence is no less than the half of its period, where $w(0)=w(2)$ and $w(1)=w(3)$.

Finally, it seems that the 2-adic complexity of sequences constructed by Su et al. from many examples could arrive the upper bound in Theorem 1. But, due to our limited ability, we can not prove $\mathrm{gcd}(S(2),2^{2p}+1)=5$ in this paper. Therefore, we sincerely invite those readers who are interested in this problem to participate in this work.




\end{document}